\theoremstyle{plain}
\newtheorem{theorem}{Theorem}
\newtheorem{proposition}[theorem]{Proposition}
\newtheorem{lemma}[theorem]{Lemma}
\newtheorem{corollary}[theorem]{Corollary}
\theoremstyle{definition}
\newtheorem{example}{Example}
\newtheorem{definition}{Definition}
\newtheorem{remark}{Remark}
\begin{document}

\title{Normal art galleries: wall in - all in}

\author{Zoran {\v S}uni{\'c}}
\thanks{Partially supported by the National Science Foundation}

\address{Department of Mathematics, Texas A\&M University, College Station, TX 77843-3368, USA}

\email{sunic@math.tamu.edu}

\begin{abstract}
We introduce the notion of a normal gallery, a gallery in which any
configuration of guards that visually covers the walls necessarily covers the entire
gallery. We show that any star gallery is normal and any gallery with at most
two reflex corners is normal. A polynomial time algorithm is provided deciding
if, for a given gallery and a finite set of positions within the gallery, there
exists a configuration of guards in some of these positions that visually covers
the walls, but not the entire gallery.       
\end{abstract}

\keywords{art galleries, guards, visibility in polygons}

\maketitle

\section{Introduction and main results}

An art gallery is a simple polygon (the boundary is a simple closed curve
consisting of a finite number of line segments) and a guard is a designated
point in the gallery. A guard $G$ in an art gallery $\Gamma$ visually covers
every point $A$ in the gallery for which the segment $GA$ is entirely within
$\Gamma$ (including the possibility that $GA$ intersects the boundary). A
configuration $F$ of guards that visually covers the gallery is a set of points
in the gallery such that every point in the gallery is visually covered by at
least one of the guards in $F$. It is known that a gallery with $n$ corners
can always be visually covered by a configuration of $\lfloor n/3 \rfloor$
guards~\cite{chvatal:n/3}. The subject of visual coverage of art galleries has
developed quite substantially since that paper, for instance by restricting or
extending the types of polygons considered (orthogonal
polygons~\cite{kahn-k-k:orthogonal,orourke:orthogonal,
sack-touissaint:orthogonal} or polygons with
holes~\cite{orourke:b-art,bjorling-s-s:with-holes,hoffmann-k-k:with-holes}),
considering guards that can move~\cite{orourke:mobile}, restricting the
positions that guards can occupy~\cite{bjorling:edge},  considering
combinations of existing variations, exploring higher dimensions, and so on. A
survey on the various achievements and directions of study can be found
in~\cite{urrutia:handbook-art}. 

We consider the relation between the visual coverage of the walls and the rest of the gallery.  

\begin{example}
It is known that there exists an art gallery $\Gamma$ and a configuration of
guards in $\Gamma$ that visually covers the walls of $\Gamma$, but does not
visually cover the entire gallery. One such gallery is exhibited in~\cite[page
4]{orourke:b-art} and we reproduce (a version of) it, denoted by $\Gamma_6$,
in Figure~\ref{f:6gear}. The guards $A$, $B$, and $C$ visually cover the
walls, but none of them covers $D$ (the dashed lines in the figure represent
the sight lines of $A$, $B$ and $C$)  
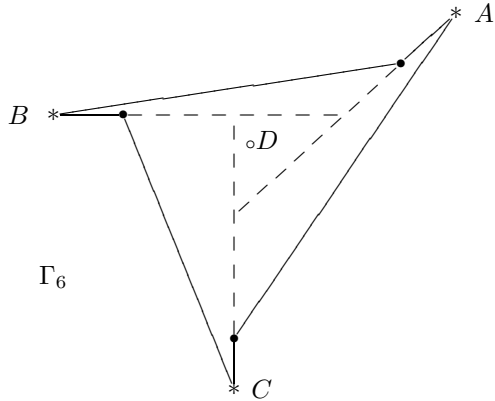
\begin{figure}[!ht]
\[
\xymatrix@R=13pt@C=15pt{
&&& &&&&&& &*{\ast} \ar@{-}[ld] \ar@{}[r]|{\textstyle{A}}&& 
\\
&&& &&&&&& *{\scriptstyle{\bullet}} \ar@{--}[dddlll]&&& 
\\
&&&*{\ast} \ar@{-}[rrrrrru] \ar@{-}[r] \ar@{}[l]|{\textstyle{B}}  & *{\scriptstyle{\bullet}} \ar@{--}[rrrr]&&&&&  &&& \\
&&& &&&& \ar@{}[ul]|{\scriptstyle{\circ}\textstyle{D}} && &&&\\
&&& &&&&&& &&&\\
&&& \Gamma_6 
\\
&&& &&&*{\scriptstyle{\bullet}} \ar@{-}[rrrruuuuuu] \ar@{--}[uuuu]&&& &&& 
\\
&&& &&&*{\ast} \ar@{-}[lluuuuu] \ar@{-}[u] \ar@{}[r]|{\textstyle{C}}&&& &&& 
}
\]
\caption{The guards $A$, $B$ and $C$ visually cover the walls, but not the
entire gallery}
\label{f:6gear}
\end{figure}
\end{example}

We are interested in conditions under which any configuration of guards that
visually covers the walls of a gallery necessarily covers the entire gallery.  

\begin{definition}
An art gallery $\Gamma$ is called \emph{normal} if any configuration of guards
in $\Gamma$ that visually covers the walls of the gallery necessarily covers the entire
gallery.  
\end{definition}

Note that, from a practical point of view, checking if a configuration of guards
covers the walls of a gallery may be an easier task than checking if it covers
the entire gallery. For instance, there might be situations in which it is
easy to implement sensor control (or some other type of control) along the
walls, while the access to the interior is restricted in some way. If we know
that the gallery is normal and we have a way of monitoring that the walls are
visually covered at all times, then, as long as the walls are covered, we are
sure that the guards are covering the entire gallery. In such a situation, for
as long as the walls are covered, we may even allow  relatively free movement
of the guards within the gallery without the need to constantly guide them, communicate with them, or
even have an information about their location.  

The terminology for normal galleries is modeled, albeit superficially, on the
terminology used for normal field extensions. Namely, the defining property of
normal field extensions is the ``one in - all in'' property of the roots,
while the defining property of the normal galleries is ``wall in - all in''.

We provide two (independent) sufficient conditions for a gallery to be normal.
Recall that a reflex corner in a gallery is a corner at which the interior
angle is greater than 180 degrees.  

\begin{theorem}\label{t:2reflex}
Every gallery with no more than two reflex corners is normal. 
\end{theorem}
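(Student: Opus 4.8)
The plan is to prove the contrapositive: assume the configuration $F$ visually covers every wall point of $\Gamma$ but fails to cover some interior point $D$, and from this produce a wall point covered by no guard, contradicting the hypothesis. The base case fixes intuition and isolates the role of reflex corners. If $\Gamma$ is convex then every point sees every other point, so covering the walls already forces $F\neq\emptyset$ and hence covers all of $\Gamma$; thus only reflex corners can make a segment between two interior points leave $\Gamma$. The whole argument is an analysis of how, with at most two reflex corners $R_1,R_2$, these obstructions can be arranged.

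The first real step is to record the global structure of the uncovered set $U=\Gamma\setminus\bigcup_i V(G_i)$, where $V(G_i)$ is the (closed) visibility polygon of the guard $G_i$. Since each $V(G_i)$ is closed, $U$ is open, and it is nonempty because $D\in U$. The portion of $\partial U$ lying in the interior of $\Gamma$ consists of sub-segments of \emph{windows}: a window of a guard $G$ is the chord of $\Gamma$ emanating from a reflex corner visible from $G$, collinear with $G$ and that corner. This is exactly where the hypothesis enters: every window of every guard lies on a line through $R_1$ or through $R_2$, so the interior boundary of $U$ is covered by just \emph{two pencils of lines}, one through each reflex corner. Equivalently, working from $D$ one sees that all guards are trapped in the at most two pockets of $\Gamma\setminus V(D)$ cut off by the windows $W_1=R_1Q_1$ and $W_2=R_2Q_2$.

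The heart of the argument is an extremal positioning of the uncovered region against these two corners. Let $C$ be the connected component of $U$ containing $D$, let $\ell$ be the line through $R_1$ and $R_2$, and let $v$ be a point of $\overline{C}$ at maximal distance from $\ell$ on one side (such a point exists off $\ell$ since $C$ is open and two-dimensional). A supporting line of $\overline{C}$ at $v$ is parallel to $\ell$. If $v$ lay in the relative interior of a window, that window would have to lie along the supporting line; but a line through $R_i$ parallel to $\ell$ is $\ell$ itself, putting $v$ on $\ell$ at distance $0$, a contradiction. Hence $v$ is a corner of $\overline{C}$ at which two boundary segments meet. If both were windows through the same $R_i$ they would meet only at $R_i\in\ell$, again forcing distance $0$; so either one of the segments at $v$ is a piece of wall, or the two windows at $v$ point to the two distinct corners $R_1$ and $R_2$. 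Tracing $\partial C$ downward from $v$ along the window through $R_1$, it must terminate at $R_1$, at a crossing with another window, or at a wall point. In every branch the uncovered region is pinned either against a reflex corner $R_i$ or against the wall.

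The main obstacle is the final, local step: turning this structural picture into an explicitly uncovered wall point, and disposing of the tangential configurations in which $\overline{C}$ merely grazes the wall along a segment whose points are (barely) covered by sightlines running nearly along that wall. Here I would work at the reflex corner $R_i$ against which $C$ is pinned, examine the wall edge at $R_i$ bounding $V(D)$ together with the sightlines reaching into $V(D)$ through the windows $W_1,W_2$, and show that a short arc of wall adjacent to a window endpoint lies outside the visibility of both pockets and is therefore uncovered. The cases requiring separate care are: the single-pocket case (essentially the one–reflex-corner gallery), which is the easiest; the two-pocket case, where the extremal choice of $v$ and the fact that all relevant windows are confined to the two pencils must be used to rule out that the second pocket covers what the first misses; and the grazing cases, where the target wall point must be perturbed along its edge so that some guard's sightline is forced to leave $\Gamma$ at $R_1$ or $R_2$. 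This local visibility bookkeeping is where the real work lies; the global reduction above is what reduces the problem to a finite, controllable collection of configurations.
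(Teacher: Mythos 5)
Your proposal does not reach a proof: everything after the extremal-point reduction is a plan rather than an argument. The decisive claim --- that ``a short arc of wall adjacent to a window endpoint lies outside the visibility of both pockets and is therefore uncovered'' --- is precisely the content of the theorem in this approach, and you explicitly defer it (``this local visibility bookkeeping is where the real work lies''), listing the two-pocket and grazing cases as open. Nothing in the preceding structure forces that conclusion: with guards trapped in two pockets behind windows at $R_1$ and $R_2$, it is entirely conceivable a priori that the two pockets' visibility regions jointly cover the walls while missing an interior region, and ruling this out is the whole difficulty. So the submission is a correct set-up (the ``two pencils of windows'' observation is sound and matches the paper's framework) followed by a gap exactly where the proof has to happen. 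There is also a soft spot earlier: after placing the extremal point $v$, the case where a boundary segment at $v$ is a wall piece is waved at (``pinned against the wall'') but never exploited.

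For comparison, the paper closes the argument by a counting scheme that avoids any local wall analysis. One first shows two lemmas about the hidden region $R$: the closure of each connected component $R_0$ is a \emph{convex} polygon (a reflex corner of $\overline{R}_0$ would let a guard's sight line continue into $R_0$), and no single guard lies on two support lines of the same component (convexity of $R_0$ plus the triangle $GAD$ argument). Since $\partial R_0$ is seen but $R_0$ is not, every support line of $R_0$ carries a guard, and $\overline{R}_0$, being a convex polygon, has at least three support lines --- hence at least three distinct guards. The key step is then directional: the reflex corner $C$ that blocks the guard $G_\ell$ on a support line $\ell$ must have $R_0$ and both wall-neighbors $A,B$ of $C$ on the same side of $\ell$; the only other support line $\ell'$ of $R_0$ through $C$ has $R_0$ on the opposite side from $A,B$, so $C$ cannot also be what blocks $G_{\ell'}$. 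Thus distinct support lines need distinct obstructing reflex corners, giving at least three reflex corners --- contradiction with the hypothesis. Note that this sidedness argument is also what defeats the pigeonhole shortcut your two-pencils observation suggests (two support lines through the same $R_i$ are not by themselves contradictory, since $R_i$ may be a vertex of $\overline{R}_0$). If you want to salvage your outline, the convexity of the hidden component and the one-guard-per-support-line lemma are the tools you are missing; with them, your case analysis at $v$ collapses into the paper's counting argument.
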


A star gallery is a gallery that can be visually covered by a single guard
(more formally, there exists a point $P$
within the gallery such that, for every point $X$ in the gallery, the segment
$PX$ is entirely within the gallery). 

\begin{theorem}\label{t:star}
Every star gallery is normal. 
\end{theorem}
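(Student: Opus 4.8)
The plan is to exploit the defining point of a star gallery directly. Fix a point $P$ in $\Gamma$ with the property that the segment $PY$ lies in $\Gamma$ for every point $Y$ of $\Gamma$, and let $F$ be an arbitrary configuration of guards that visually covers the walls. I will show that every point of $\Gamma$ is covered by some guard in $F$, which is exactly the assertion that $\Gamma$ is normal.

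First I would reduce the coverage of an arbitrary interior point to the coverage of a single wall point. Given a point $X$ in $\Gamma$, consider the ray emanating from $P$ through $X$ and let $W$ be the last point of $\Gamma$ along this ray, i.e.\ the point at which the ray leaves the gallery. Then $W$ lies on the walls, $X$ lies on the segment $PW$, and, by the star property, the whole segment $PW$ is contained in $\Gamma$. Since $F$ covers the walls, there is a guard $G$ in $F$ with $GW \subseteq \Gamma$.

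The key step is to prove that the entire triangle with vertices $P$, $G$, $W$ is contained in $\Gamma$. All three sides are already in $\Gamma$: the side $PW$ by the star property, the side $GW$ because $G$ sees $W$, and the side $PG$ again by the star property, as $P$ sees every point of $\Gamma$ and in particular $G$. To fill in the interior, take any point $Y$ of the triangle and sweep the ray from $P$ through $Y$; since $P$ is a vertex of the triangle, this ray meets the opposite side $GW$ in a point $Z$, with $Y$ lying on the segment $PZ$. As $Z \in GW \subseteq \Gamma$, the star property gives $PZ \subseteq \Gamma$, hence $Y \in \Gamma$. Thus the triangle $PGW$ lies in $\Gamma$.

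It remains only to observe that $X$ lies on the side $PW$ while $G$ is the opposite vertex, so the segment $GX$ is a chord of the triangle $PGW$ and is therefore contained in $\Gamma$; hence $G$ visually covers $X$. As $X$ was arbitrary (the degenerate cases $X = P$ and $P$, $G$, $W$ collinear being immediate, since then $GX \subseteq PW \cup GW$), $F$ covers all of $\Gamma$. I expect the only point requiring genuine care to be the ray-sweeping argument establishing that the triangle is contained in $\Gamma$; everything else follows directly from the definition of a star gallery and of visual coverage.
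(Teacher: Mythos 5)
Your proof is correct and takes essentially the same route as the paper: reduce coverage of an arbitrary point of $\Gamma$ to coverage of the segment joining the star point to a wall point, then observe that any guard seeing that wall point sees the whole segment. The only difference is that you justify the containment of the triangle $PGW$ in $\Gamma$ explicitly (via a ray sweep from the star point), a step the paper's proof asserts implicitly by appealing to the segment $SW$ lying in the gallery.
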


The proof of Theorem~\ref{t:2reflex} is based on the concept of support lines,
which is also helpful in establishing the following result.  

\begin{proposition}\label{p:2guards}
If the walls of a gallery are visually covered by one or two guards, then these
guards cover the entire gallery (even if the gallery is not normal).  
\end{proposition}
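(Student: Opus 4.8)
The plan is to argue by contradiction, exploiting the structure of visibility pockets and the support lines that carry their windows. Suppose the walls $\partial\Gamma$ are covered by guards $G_1,\dots,G_k$ with $k\le 2$, yet some point of $\Gamma$ is uncovered. Writing $V(G_i)$ for the set of points seen by $G_i$, the uncovered set $U=\Gamma\setminus\bigcup_i V(G_i)$ is then a nonempty subset of the interior, since it cannot meet $\partial\Gamma$; fix $D\in U$. For each guard $G_i$ the point $D$ lies in a connected component of $\Gamma\setminus V(G_i)$, a \emph{pocket}, whose boundary consists of a \emph{window} $w_i=[r_i,s_i]$ together with an arc $\alpha_i$ of $\partial\Gamma$, where $r_i$ is a reflex corner, $s_i\in\partial\Gamma$, and $G_i$, $r_i$, $s_i$ are collinear with $r_i$ lying between $G_i$ and $s_i$. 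By construction the relative interior of $\alpha_i$ is invisible to $G_i$. This already settles the one-guard case: if $k=1$, then the interior of $\alpha_1$ is an uncovered portion of the wall, contradicting the hypothesis.

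For $k=2$ the coverage hypothesis forces a mutual dependence: since $G_1$ cannot see the interior of $\alpha_1$, that arc must be covered by $G_2$, and symmetrically the interior of $\alpha_2$ must be covered by $G_1$. Let $\ell_i$ denote the support line through $G_i$ and $r_i$; it carries the window $w_i$ and separates the plane into the half containing $D$'s pocket and the half containing $G_i$. The reflex corner $r_i$ is the pivot past which the sight of $G_i$ is cut off, so any line of sight that reaches a point of $\alpha_i$ lying strictly beyond $r_i$ (on the far side of $\ell_i$ from $G_i$) is forced to cross the open window $w_i$.

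The crux is to convert this crossing requirement into a contradiction. To cover the deepest part of $\alpha_1$, the guard $G_2$ must have sightlines crossing $w_1$, which constrains $G_2$ to lie on the pocket side of $\ell_1$; but a guard on the pocket side of $\ell_1$ whose sightlines sweep across $w_1$ then also has $D$ within its field of view, whence $G_2$ sees $D$, a contradiction. Making this precise is the main obstacle: one must track the cyclic order in which $w_1$, $w_2$, $\alpha_1$, $\alpha_2$ appear along $\partial\Gamma$ and rule out the configuration in which each guard illuminates the other's pocket-arc while their common shadow still hides $D$. The essential point is that such a mutually occluding, cyclically arranged configuration does exist for three guards (this is exactly what $\Gamma_6$ realizes), but with only two windows $w_1,w_2$ there is no room to close the cycle: the two support lines $\ell_1,\ell_2$ cannot simultaneously place each guard in the other's illuminating position and in its own shadow-casting position. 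I would carry this out by a case analysis on the position of $G_2$ relative to $\ell_1$ (and of $G_1$ relative to $\ell_2$), showing in each case that either some sub-arc of $\alpha_1$ or $\alpha_2$ is left uncovered, contradicting the wall hypothesis, or one of the guards is forced to see $D$, contradicting $D\in U$.
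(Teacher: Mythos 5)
Your one-guard case is correct: the wall arc $\alpha_1$ of the pocket containing $D$ is invisible to $G_1$ (any visible point of the pocket would have to lie on the window line), so a single guard covering the walls covers everything. But the two-guard case --- the actual content of the proposition --- is not a proof; it is a plan whose decisive step you yourself flag as ``the main obstacle,'' and the claims it rests on are imprecise or false. Concretely: (i) the line $\ell_1$ passes through $G_1$, so it cannot ``separate the plane into the half containing $D$'s pocket and the half containing $G_1$''; worse, a pocket need not lie in a half-plane of $\ell_1$ at all, since the wall arc $\alpha_1$ may cross the extension of the window beyond $s_1$. (ii) The pivotal claim that a guard on the pocket side of $\ell_1$ whose sightlines cross $w_1$ ``then also has $D$ within its field of view'' is false in general: $G_2$ and $D$ can both lie in the pocket of $G_1$, which is typically non-convex, and reflex corners along $\alpha_1$ can block the view from $G_2$ to $D$. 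Indeed, that mutual covering of pocket arcs while jointly hiding a point is impossible with two guards (but possible with three, as $\Gamma_6$ shows) is exactly what must be proved; your sketch assumes it at this step, and the promised case analysis on the positions of $G_2$ relative to $\ell_1$ and $G_1$ relative to $\ell_2$ is never carried out.

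The missing idea is to abandon the per-guard pockets (which are not convex, which is why your case analysis gets stuck) and work instead with the region $R$ hidden from \emph{all} guards simultaneously. This is the paper's route: each connected component $R_0$ of $R$ has convex closure (at a reflex corner of $\overline{R_0}$, the sight line of a guard covering an incident boundary segment would continue into $R_0$); every line supporting a side of $\overline{R_0}$ must carry a guard (a guard seeing a relative-interior point of that side from off the line would see past it into $R_0$); and, by convexity, no guard can lie on two support lines of the same component (the triangle spanned by such a guard and the two sides would give it an unobstructed view into $R_0$). Since $\overline{R_0}$ is a polygon with at least three sides, at least three guards are required; this is Corollary~\ref{c:ws3}, of which the proposition is a restatement. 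Some such structural fact about the \emph{common} hidden region --- convexity being the natural one --- is indispensable, and without it your two-guard case does not close.
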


Deciding if a given gallery is normal seems to be a nontrivial task. However,
the task simplifies if the positions where the guards could be placed are
restricted.  

\begin{definition}
An art gallery $\Gamma$ is called \emph{normal with respect to a set} $A
\subseteq \Gamma$ if any configuration of guards in $A$ that visually covers the
walls of the gallery necessarily covers the entire gallery.  
\end{definition}

In particular, a gallery $\Gamma$ is normal if it is normal with respect to
$\Gamma$. Another special case of interest is when the positions of the guards
are restricted to the corners (or any particular finite set of positions within
the gallery). In this case there is an algorithm that decides, given as input a
gallery, if the given gallery is normal with respect to its set of corners. 
More generally (and more precisely), the following holds.  

\begin{theorem}\label{t:algorithm}
There exists an algorithm that decides, given as input a gallery $\Gamma$
with $n$ corners and a set $A$ of $m$ points in $\Gamma$, if $\Gamma$ is normal
with respect to $A$. The algorithm runs in $O(m^2n(m+n)\log m)$ time. 
\end{theorem}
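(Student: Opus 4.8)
The plan is to reduce the decision problem to a search over a single ``target'' point that is meant to be left uncovered, and then to discretize that search. Since visual coverage is monotone (enlarging a configuration can only increase what is covered), if some configuration $F \subseteq A$ covers the walls but misses an interior point $D$, then so does the configuration $A_D := \{\, g \in A : g \text{ does not see } D \,\}$: indeed every guard of $F$ must fail to see $D$, so $F \subseteq A_D$, while $A_D$ still misses $D$ by construction. Hence $\Gamma$ fails to be normal with respect to $A$ if and only if there is a point $D \in \Gamma$ for which $A_D$ already covers the walls. Writing $V(g)$ for the visibility polygon of $g$ and, for a wall point $W$, $R(W) := \bigcap_{g \text{ sees } W} V(g)$ for the set of points seen by every guard that sees $W$, one checks that $A_D$ covers the walls precisely when $D \notin R(W)$ for every $W$; thus $\Gamma$ is normal with respect to $A$ iff $\bigcup_{W \in \partial\Gamma} R(W) = \Gamma$, and the task becomes to decide whether these witness regions cover the gallery.

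For the discretization I would compute the $m$ visibility polygons $V(g)$, each in $O(n)$ time and of $O(n)$ size, and read off, for each guard, the $O(n)$ sub-arcs of $\partial\Gamma$ it sees. Merging the resulting $O(mn)$ endpoints splits the boundary into $O(mn)$ atomic arcs, on each of which the seer-set $G(\cdot)$ is constant and changes by a single guard between consecutive arcs. The condition ``$A_D$ covers the walls'' then reads: every atomic arc is seen by some guard that does not see $D$. This is an interval-cover condition, which I would maintain with a segment tree over the atomic arcs supporting range increment/decrement (a guard entering or leaving $A_D$ affects its $O(n)$ arcs) together with a global-minimum query; the walls are covered exactly when the minimum coverage count is positive, tested in $O(\log m)$ time.

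To drive the search, note that the set of bad points $D$ is open, being the complement of finitely many closed regions $R(W)$, so if it is nonempty it meets a neighborhood of a visibility-window chord of some $V(g)$. It therefore suffices to sweep $D$ along the $O(mn)$ window chords, updating $S(D)$ (and hence $A_D$) by a single guard at each crossing of another visibility boundary and re-testing coverage through the segment tree. Accounting for the $O(mn)$ chords, the $O(m+n)$ combinatorial events along each, and the per-event segment-tree cost yields the claimed $O(m^2 n(m+n)\log m)$ bound; Proposition~\ref{p:2guards} furthermore guarantees that no configuration of fewer than three guards can be a witness, a useful consistency check though not needed for correctness.

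The main obstacle I anticipate is precisely the discretization: proving rigorously that examining only the window chords—rather than every two-dimensional cell of the full visibility arrangement, of which there may be $\Theta((mn)^2)$—still detects a bad point whenever one exists, and bounding the number of sweep events along each chord by $O(m+n)$ so that the arrangement is never built in full. Establishing both the correctness of this candidate set and the event bound, while arranging the segment-tree updates to be incremental along the sweep so that no per-chord reinitialization is incurred, is where the real work lies; the reduction and the coverage test in the first two steps are comparatively routine.
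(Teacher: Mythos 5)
Your opening reduction is sound, and it is essentially the same monotonicity observation the paper makes: the maximal candidate witness missing a point $D$ is $A_D$, which for $D$ in the interior of a cell $R$ of the window arrangement is exactly the set $\overline{V}(R)$, so non-normality with respect to $A$ is equivalent to some $A_D$ covering the walls. Your discretization principle is also correct as far as it goes: the bad set is a union of open cells of the arrangement of windows, and every such cell (other than the whole gallery) has a window chord on its boundary, so sweeping near window chords does not miss witnesses. The gaps come after that. First, the event bound your complexity analysis rests on --- $O(m+n)$ crossings per window chord --- is not established (you concede this yourself in the last paragraph), and it can fail badly: a fixed chord can cross $\Theta(n)$ of the windows emanating from a single other point $Q$ (a segment can cross the boundary of $Q$'s visibility polygon $\Theta(n)$ times, all at window edges), hence $\Theta(mn)$ windows in total over the $m$ points. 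The correct bound is global, not per chord: the total complexity of the arrangement is $O(m^2n)$ by Theorem~\ref{t:number-regions} of Bose, Lubiw, and Munro, a result you never invoke; without it your total event count is only bounded by $O((mn)^2)$.

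Second, even granting your per-chord bound, the arithmetic does not yield the claimed running time. Each sweep event changes $A_D$ by one guard, and inserting or deleting one guard's wall coverage costs $O(n\log(mn))$ in your segment tree (a guard sees $O(n)$ wall intervals), so your own accounting gives $O(mn)\cdot O(m+n)\cdot O(n\log(mn)) = O(mn^2(m+n)\log(mn))$, not $O(m^2n(m+n)\log m)$; using the correct global event count $O(m^2n)$ instead gives $O(m^2n^2\log(mn))$, which still exceeds the claimed bound when $n \gg m$. The ingredient you are missing is the one that carries the paper's proof: the notion of a \emph{sink}. By Lemma~\ref{l:sink-enough}, the monotonicity of $\overline{V}(\cdot)$ along the dual graph of the visibility decomposition shows that it suffices to test the sets $\overline{V}(R)$ only for sink regions $R$, and by Theorem~\ref{t:number-sinks} there are only $O(m(m+n))$ sinks --- far fewer than the $\Theta(m^2n)$ cells your sweep would have to visit. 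Testing each sink independently, by merging the presorted wall-interval lists of the guards in $\overline{V}(R)$ and running Klee's measure algorithm, costs $O(mn\log m)$ per sink, giving $O(m(m+n))\cdot O(mn\log m) = O(m^2n(m+n)\log m)$. Your sweep could perhaps be repaired into a correct (if somewhat slower) algorithm, but as written the proof of the stated time bound is incomplete by your own admission and its arithmetic is inconsistent with the claim.
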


The provided algorithm is based on the decomposition into visibility regions
with respect to a finite set of points, which is the preprocessing step in the
work of Bose, Lubiw, and Munro~\cite{bose-atal:visibility-queries} on
visibility queries. In particular, the complexity estimate is based on the
tight bound on the number of certain special regions in the visibility
decomposition (called sinks) provided in~\cite{bose-atal:visibility-queries}.
 

\section{Normal galleries}

\subsection{Support lines and galleries with no more than 2 reflex corners}

Before we are ready for the proof of Theorem~\ref{t:2reflex} we need to develop
the concept of a support line.  

For the duration of the present subsection we keep the following setting and
notation. Let $\Gamma$ be a gallery that is not normal and let $F$ be a
configuration of guards that visually covers the walls of the gallery without
covering the entire gallery. Let $R$ be the region in the gallery that is
not visually covered by any guard (the hidden region for the configuration $F$).
Since the walls of the gallery are visually covered by the guards, the region
$R$ is (an open set) in the interior of the gallery and the entire boundary
$\partial R$ of $R$, which consists of one or more polygonal lines, is visually
covered.   

Call the lines that support the boundary segments of the closure $\overline{R}$ the
support lines of $R$. Since each boundary segment of $\overline{R}$ is visually
covered by the guards, while $R$ is not, there must be a guard on each support
line.  

Each connected component of $R$ is the interior of a polygon and must have at
least three boundary segments.  

\begin{example}
Note that the hidden region $R$ does not have to be connected. For instance, the
gallery on the left in Figure~\ref{f:nostar} is not normal (guards are placed at
1,2,3,4,5) and the hidden region has two components.   
\begin{figure}[!ht]
\[
\begin{array}{ccc}
\xymatrix@R=8pt@C=8pt{
 *{\scriptstyle{\bullet}} \ar@{-}[rrr] &&& 
 *{\scriptstyle{\bullet}} \ar@{-}[d] &&&&&&
\\
 *{\ast} \ar@{-}[u] \ar@{}[d]|<<<<{\textstyle{1}}&
 *{\scriptstyle{\bullet}} \ar@{-}[l] \ar@{..}[rr] \ar@{..}[dr] &&
 *{\scriptstyle{\bullet}} \ar@{-}[dl] &&&&&&
\\
 &&
 *{\scriptstyle{\bullet}} \ar@{-}[ddrr] &&&&&*{\scriptstyle{\circ}} \ar@{-}[ddrr] &&
\\ 
 &
 *{\scriptstyle{\bullet}} \ar@{-}[uu] &&&&&
 \ar@{..}[d] \ar@{..}[dr] &&&
\\
 *{\ast} \ar@{-}[ur] \ar@{}[d]|<<<<{\textstyle{2}}&&&&
 *{\ast} \ar@{-}[r] \ar@{}[u]|<<<<{\textstyle{3}}&
 *{\scriptstyle{\bullet}} \ar@{-}[uurr] &
 *{.} \ar@{..}[r] &
 *{\scriptstyle{\bullet}} \ar@{-}[dl] &&
 *{\scriptstyle{\bullet}} \ar@{-}[dl] 
\\
 &&&&&&
 *{\scriptstyle{\bullet}} \ar@{-}[d] &&
 *{\ast} \ar@{-}[ul] \ar@{}[dr]|{\textstyle{5}}&
\\
 &&&&&&
 *{\ast} \ar@{-}[uullllll] \ar@{}[r]|{\textstyle{4}}&&&
}
&&
\xymatrix@R=8pt@C=8pt{
 *{\scriptstyle{\bullet}} \ar@{-}[rrrrrr] \ar@{-}[d]&& &&&& 
 *{\scriptstyle{\bullet}} \ar@{-}[ddddd]  
\\
 *{\scriptstyle{\bullet}} \ar@{-}[rr]&& 
 *{\scriptstyle{\circ}} \ar@{}[u]|{\textstyle{1}}&&
 *{\scriptstyle{\bullet}} \ar@{-}[ll] \ar@{-}[d] &&   
\\
 *{\scriptstyle{\bullet}} \ar@{-}[rr] \ar@{-}[ddd] &&
 *{\scriptstyle{\bullet}} \ar@{-}[d] &&
 *{\scriptstyle{\circ}}\ar@{}[r]|{\textstyle{2}} \ar@{-}[dd] 
 &&  
\\
 && 
 *{\scriptstyle{\circ}}\ar@{}[l]|{\textstyle{4}} \ar@{-}[d]& &&& 
\\
 &&
 *{\scriptstyle{\bullet}} \ar@{-}[r] & 
 *{\scriptstyle{\circ}} \ar@{-}[r] \ar@{}[d]|{\textstyle{3}} &
 *{\scriptstyle{\bullet}}&& 
\\
*{\scriptstyle{\bullet}} \ar@{-}[rrrrrr]  &&& &&& *{\scriptstyle{\bullet}}
}
\end{array}
\]
\caption{A gallery with a disconnected hidden region (left) and a normal gallery
that is not a star gallery (right)}
\label{f:nostar}
\end{figure}
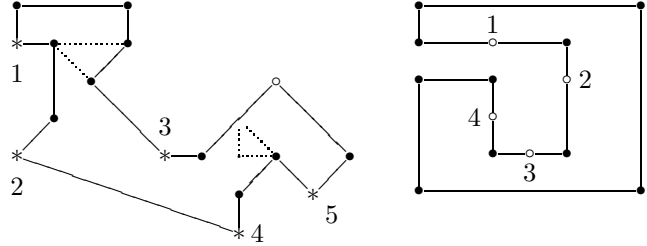

The same example shows that the boundary of the hidden region $R$ may in parts
coincide with the boundary of the gallery, that some support lines may have more
than one guard (2 and 3 are on the same support line, as well as 3 and 4), and
that some guards may be on more than one support line (guard 3) 
\end{example}

\begin{lemma}
The closure of every connected component of the hidden region is a convex polygon. 
\end{lemma}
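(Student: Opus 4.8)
The plan is to show that $\overline{C}$ has no reflex corner, since for a simple polygon this is equivalent to convexity; equivalently, I will show that $\overline{C}$ lies in one of the two closed half-planes cut out by the support line of each of its edges. So I fix a connected component $C$ of $R$ and argue by contradiction: suppose $\overline{C}$ has a reflex vertex $v$, and let $s$ be one of the two edges of $\overline{C}$ incident to $v$, lying on the support line $\ell$.

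First I record the local geometry at $v$. Because the interior angle of $\overline{C}$ at $v$ exceeds $\pi$, the angular sector occupied by $C$ near $v$ spans more than $\pi$, so the ray extending $s$ beyond $v$ points strictly into that sector. Hence, just past $v$ along $\ell$, the line enters the open set $C\subseteq R$, and I may choose a point $p\in\ell\cap C$ so close to $v$ that the whole sub-segment $[v,p]$ of $\ell$ is contained in $\overline{C}\subseteq\Gamma$.

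Next I bring in the guard. Since $\ell$ is a support line of $R$, the earlier observation gives a guard $g$ on $\ell$; by the very reason it is forced onto $\ell$ — it is there to cover the boundary segment $s$ while the adjacent region $R$ stays hidden — the guard $g$ covers the relative interior of $s$. As $g$ and $s$ are collinear, they lie in a common maximal segment $J$ of $\ell\cap\Gamma$. Now concatenating $[x,v]\subseteq s\cup\{v\}$ with $[v,p]\subseteq\overline{C}$ shows that the whole segment from a point $x$ in the relative interior of $s$ to $p$ stays in $\Gamma$, so $p$ lies in the same component $J$ as $s$, hence as $g$. Consequently $[g,p]\subseteq J\subseteq\Gamma$, and $g$ visually covers $p$ — contradicting $p\in R$. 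Therefore $\overline{C}$ has no reflex vertex and is convex.

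I expect the only delicate step to be this guard argument: one must be careful that the guard on $\ell$ really covers the edge $s$ (not merely that it sits on $\ell$), and that the stretch of $\ell$ running from $s$ across the vertex $v$ to $p$ does not slip out of the gallery — a priori $v$ could be a boundary vertex of $\Gamma$. Both points are handled by the key principle that a guard on a line sees exactly the maximal sub-segment of that line inside $\Gamma$ containing it, combined with the choice of $p$ close enough to $v$ that $[v,p]\subseteq\overline{C}\subseteq\Gamma$. The reflex-angle computation in the first step is routine by comparison.
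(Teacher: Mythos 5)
Your proof is correct and follows essentially the same route as the paper's: at a reflex corner the extension of an incident boundary edge enters the hidden component, and the guard forced onto that edge's support line therefore sees past the corner into the hidden region, a contradiction. Your maximal-subsegment principle for collinear visibility is just a more formal rendering of the paper's remark that the sight line from $G_1$ ``cannot be interrupted at $C$'' because the neighborhood of the reflex corner inside the hidden component lies in the interior of the polygon.
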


\begin{proof}
Assume otherwise. Let $C$ be a reflex corner of the closure at the intersection
of the boundary segments $AC$ and $BC$ of the connected component $R_0$ of the
hidden region, and let $G_1$ be a guard on the support line of the side $AC$
visually covering the side $AC$ (see Figure~\ref{f:convex}).  
\begin{figure}[!ht]
\[
\xymatrix@R=10pt{
 && 
 *{\ast} \ar@{--}[d] \ar@{}[r]|{\textstyle{G_2}} && 
\\
 & 
 *{\scriptstyle{\circ}} \ar@{..}[r] & 
 *{\scriptstyle{\circ}} \ar@{..}[d] \ar@{}[r]|{\textstyle{B}} &&
\\
 *{\scriptstyle{\circ}} \ar@{..}[ur] && 
 *{\scriptstyle{\circ}} \ar@{..}[dr] \ar@{}[r]|{\textstyle{C}} &&
\\
 *{\scriptstyle{\circ}} \ar@{..}[u] &
 \ar@{}[ur]|<<<{\textstyle{R_0}} &&
 *{\scriptstyle{\circ}} \ar@{..}[dl] \ar@{--}[dr] \ar@{}[ur]|{\textstyle{A}} &
\\
 & 
 *{\scriptstyle{\circ}} \ar@{..}[ul] &
 *{\scriptstyle{\circ}} \ar@{..}[l] &&
 *{\ast} \ar@{}[u]|{\textstyle{G_1}}
}
\]
\caption{Closures of connected components of the hidden region are convex polygons}
\label{f:convex}
\end{figure}
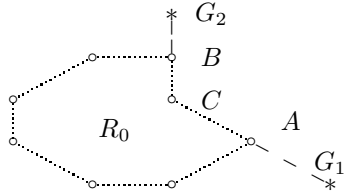
The sight line from $G_1$ towards $C$ cannot be interrupted at $C$, since the
immediate neighborhood of $C$ on the line $G_1C$ and inside the region $R_0$ is
inside the polygon. This would imply that this sight line extends inside $R_0$,
a contradiction.  
\end{proof}

Even though different support lines may share the same guard, this may not
happen if the support lines come from the same connected component of the hidden
region.  

\begin{lemma}
No guard can be on two different support lines bounding the same connected
component of the hidden region.  
\end{lemma}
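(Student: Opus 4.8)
The plan is to argue by contradiction, leaning on the preceding lemma, which guarantees that the closure $\overline{R_0}$ of the connected component in question is a convex polygon. Suppose a single guard $G$ lies on the support lines $\ell_1$ and $\ell_2$ of two distinct boundary edges $s_1$ and $s_2$ of $\overline{R_0}$, both of which it visually covers. I would first record a clean auxiliary fact: no guard can lie in $\overline{R_0}$ at all, for if $p\in\overline{R_0}$ then by convexity every segment from $p$ into $\overline{R_0}$ stays in $\overline{R_0}\subseteq\Gamma$, so $p$ would see all of $R_0$, contradicting that $R_0$ is hidden. Since $G$ lies on both $\ell_1$ and $\ell_2$, these lines are not parallel and meet exactly at $G$, so $s_1$ and $s_2$ are not parallel; and if they were adjacent, sharing a vertex $V$, then $\ell_1\cap\ell_2=\{V\}$ would force $G=V\in\overline{R_0}$, which we have just excluded. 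Hence $s_1$ and $s_2$ are non-adjacent and $G$ is a point \emph{strictly outside} $\overline{R_0}$ at which the two edge-lines, both tangent to the convex region, cross.

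The contradiction I aim for is that $G$ already sees a point of the interior $R_0$. The mechanism will be exactly the collinear extension used in the proof of the preceding lemma. Among the edges of $\overline{R_0}$, those whose supporting line has $G$ on its outer side \emph{face} $G$; these form a nonempty connected arc of $\partial R_0$ (the near chain) running between the inner endpoints $N_1$ of $s_1$ and $N_2$ of $s_2$, nonempty because $s_1,s_2$ are non-adjacent. Because $G$ covers $s_1$ and $s_2$, it sees $N_1$ along $\ell_1$ and $N_2$ along $\ell_2$, so $[G,N_1],[G,N_2]\subseteq\Gamma$. The key target is to show that $G$ sees some relative-interior point $X$ of an edge $e$ of the near chain. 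Granting this, $e$ faces $G$, so the prolongation of the sight segment $[G,X]$ beyond $X$ immediately enters $R_0$; since this prolongation is collinear with $[G,X]\subseteq\Gamma$ and its initial portion lies in $R_0\subseteq\Gamma$, the guard $G$ sees a point of $R_0$, the desired contradiction.

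The hard part is precisely this last target, namely passing from the visibility of the endpoints $N_1,N_2$ (seen edge-on along the tangent lines) to the visibility of the \emph{interior} of the near chain. The obstacle is that a wall of $\Gamma$ may graze the tangent segment $[G,N_1]$ at an interior point without violating $[G,N_1]\subseteq\Gamma$, and such grazing could in principle hide the near chain from $G$. To overcome it I would rotate the sight ray at $G$ from the direction $G\to N_1$ slightly into the wedge bounded by $\ell_1,\ell_2$, so that it meets the near chain at a point $X$ arbitrarily close to $N_1$; since $G$ covers all of $s_1$, every point of $s_1$ is an interior point of the gallery, so a full one-sided neighborhood of $s_1$ just inside $R_0$ lies in $\Gamma$, and the terminal portion of $[G,X]$ lies in this neighborhood. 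Controlling the remaining portion of $[G,X]$ against grazing along $[G,N_1]$, using convexity of $\overline{R_0}$ together with the fact that the entire edges $s_1,s_2$ are visible, is the crux of the argument and the step I expect to require the most care.
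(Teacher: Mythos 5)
Your setup is sound and matches the paper's: the contradiction hypothesis, the auxiliary fact that no guard lies in $\overline{R_0}$ (hence the two edges are non-adjacent and $G$ is strictly outside), and the target of getting a sight line from $G$ that crosses into $R_0$. But the proof is genuinely incomplete at exactly the point you yourself flag as the crux. You never establish that $G$'s view toward the near chain cannot be blocked, and the rotation argument as sketched does not establish it: a wall of $\Gamma$ could, for all you have shown, touch the open segment $(G,N_1)$ from the wedge side, in which case every ray rotated slightly off $\ell_1$ toward the chain is cut off arbitrarily close to $G$. Local convexity of $\overline{R_0}$ and visibility of the edges $s_1,s_2$ alone do not exclude such a protrusion; what excludes it is a global, topological property of simple polygons, which is the ingredient missing from your argument.

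The paper's proof supplies that ingredient with one move you never make: close up the configuration with the \emph{chord} $[N_1,N_2]$, which lies in $\overline{R_0}\subseteq\Gamma$ by convexity, and consider the triangle $GN_1N_2$. All three of its sides lie in $\Gamma$: the two tangent segments because $G$ sees $N_1$ and $N_2$, and the chord because it lies in $\overline{R_0}$. Now no point of the walls can lie strictly inside this triangle: every boundary point of $\Gamma$ is a limit of exterior points, the exterior of a simple polygon is connected and unbounded, and a path in the exterior from a point inside the triangle to infinity would have to cross one of the three sides, which is impossible since they lie in $\Gamma$. Hence the whole closed triangle lies in $\Gamma$, so $G$ sees all of it; taking a relative-interior point $M$ of the chord, the segment $GM$ (extended slightly past $M$ in the degenerate case where $N_1N_2$ is itself an edge of $\overline{R_0}$) stays in $\Gamma$ and meets $R_0$, the desired contradiction. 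This single observation dissolves the grazing obstruction entirely --- nothing can protrude strictly into the triangle --- so no perturbation or rotation is needed. Without some such global argument, the step you defer cannot be carried out, so as it stands the proposal has a real gap rather than a technical loose end.
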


\begin{proof}
Assume otherwise. Let $R_0$ be a connected component of the hidden region $R$
and let    $G$ be a guard on the lines supporting the boundary segments $AB$ and
$CD$ of $R_0$ (see Figure~\ref{f:no2}).  
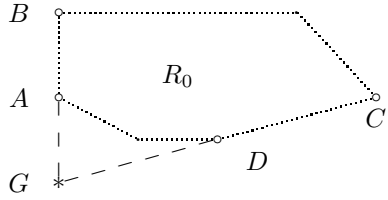
\begin{figure}[!ht]
\[
\xymatrix@R=10pt{
 &
 *{\scriptstyle{\circ}} \ar@{..}[rrr] \ar@{}[l]|{\textstyle{B}} &&&
 *{.} \ar@{..}[ddr]&  
 \\ 
 &&&&& 
 \\
 &
 *{\scriptstyle{\circ}} \ar@{..}[dr] \ar@{..}[uu] \ar@{}[l]|{\textstyle{A}} &&
 \ar@{}[ul]|{\textstyle{R_0}} && 
 *{\scriptstyle{\circ}} \ar@{}[d]|{\textstyle{C}} 
 \\
 && 
 *{.} \ar@{..}[r]& 
 *{\scriptstyle{\circ}} \ar@{..}[urr] \ar@{}[dr]|{\textstyle{D}} &&  
 \\
 &
 *{\ast} \ar@{--}[uu] \ar@{--}[urr] \ar@{}[l]|{\textstyle{G}}&&&& 
}
\]
\caption{No guard is on two support lines}
\label{f:no2}
\end{figure}
Since $R_0$ is convex the segment $AD$ is entirely within $R_0$, except for the
endpoints. No part of the walls of $\Gamma$ can be in the interior of the
triangle $GAD$ (since each of the three sides is within the polygon), which
shows that $G$ has an unobstructed view of some part of the region $R_0$, a
contradiction.  
\end{proof}

The following is an immediate corollary (note that Proposition~\ref{p:2guards}
is just a restatement of this corollary).
 
\begin{corollary}\label{c:ws3}
Let $\Gamma$ be a gallery that is not normal. Any configuration of guards in
$\Gamma$ that covers the walls, but not the entire gallery, must have at least
three members.  
\end{corollary}

\begin{proof}
Let $\Gamma$ be a gallery that is not normal and assume that a configuration of
guards that visually covers the walls, but not the entire gallery is given.  

Since the closure $\overline{R}_0$ of any connected component $R_0$ of the hidden
region $R$ has at least 3 sides, each support line contains a guard, and no
guard is on two support lines corresponding to the same connected component,
there are at least three guards in the given configuration.  
\end{proof}

\begin{proof}[Proof of Theorem~\ref{t:2reflex}]
Let $\Gamma$ be a gallery that is not normal and $F$ a configuration of guards
in $\Gamma$ that visually covers the walls but not the entire gallery. Consider
a connected component $R_0$ of the hidden region $R$.  

For each support line $\ell$ supporting a boundary segment $s$ of $R_0$ let
$G_\ell$ be a guard on $\ell$ that visually covers $s$ (if there is more
than one such guard select any of them). Recall that there are at least three
different support lines and that the guards chosen on different support lines
must be different.  

The reason the guard $G_\ell$ cannot see the hidden region $R_0$ must be a
presence of a reflex corner $C$ between $G_\ell$ and the boundary segment $s$ of
$R_0$ that is visually covered by $G_\ell$ such that the region $R_0$, and the
two neighboring corners $A$ and $B$ of $C$ are all on the same side of $\ell$
(including the possibility that one of $A$ and $B$ is on $\ell$) as indicated in
Figure~\ref{f:reflex}.  
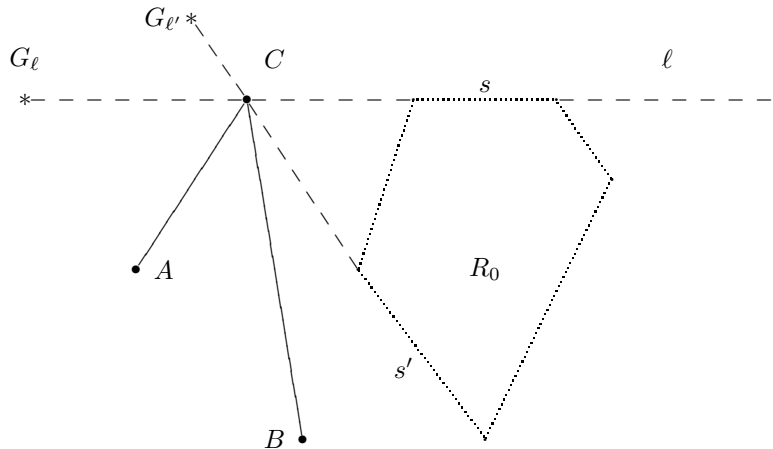
\begin{figure}[!ht]
\[
\xymatrix@C=15pt{
 &&&
 *{\ast} \ar@{--}[dr] \ar@{}[l]|{\textstyle{G_{\ell'}}} && &&&&& &&&
\\
 *{\ast} \ar@{--}[rrrr] \ar@{}[u]|{\textstyle{G_\ell}} &&&&
 *{\scriptstyle{\bullet}} \ar@{-}[ddll]\ar@{-}[ddddr]\ar@{--}[ddrr]\ar@{--}[rrr]
  \ar@{}[ur]|{\textstyle{C}}&&&
 *{.} \ar@{..}[rr]^{\textstyle{s}} &&
 *{.} \ar@{..}[dr] \ar@{--}[rrrr] &&
 \ar@{}[u]|{\textstyle{\ell}} &&
\\
 &&&&& &&&&&
 *{.} \ar@{..}[dddll] &&&
\\
 &&
 *{\scriptstyle{\bullet}} \ar@{}[r]|{\textstyle{A}} &&& &
 *{.} \ar@{..}[uur] &&R_0&& &&&
\\
 &&&&& &&&&& &&&
\\
 &&&&&
 *{\scriptstyle{\bullet}} \ar@{}[l]|{\textstyle{B}}  &&&
 *{.} \ar@{..}[uull]^{\textstyle{s'}}&& &&&
}
\]
\caption{Reflex corner obstructs the guard's view}
\label{f:reflex}
\end{figure}

Since $R_0$ is convex, there can be only one more support line of $R_0$ passing
through $C$, call it $\ell'$, but the view of the guard $G_{\ell'}$ on $R_0$
cannot be obstructed by the corner at $C$ since $R_0$ is not on the same side of $\ell'$ as $A$ and $B$. 

Therefore, a different reflex corner obstructs guards on different support
lines, and since there are at least three such lines (and guards), there are at
least three reflex corners in $\Gamma$.  
\end{proof}


\subsection{Star galleries and covers by convex polygons} 

\begin{proof}[Proof of Theorem~\ref{t:star}]
Let $\Gamma$ be a star gallery and $S$ be a point from which the entire gallery
is visible. Assume that a configuration of guards that visually covers the walls
of $\Gamma$ is given.  

We will show that for every point $W$ on the walls, there exists at least one
guard that visually covers the entire segment $SW$, which will show that the
guards visually cover the entire gallery.  

Given a point $W$ on the walls, let $G$ be a guard that visually covers it.
Since both $S$ and $W$ are visible from $G$ and the entire segment $SW$ is
within $P$, the entire segment $SW$ is visible from $G$.  
\end{proof}

It is tempting to conjecture that the converse of Theorem~\ref{t:star} is true,
but the next example shows that it is not.  

\begin{example}
Consider the gallery on the right in Figure~\ref{f:nostar}. It is not a star
gallery, but it is normal. Indeed, any configuration of guards that visually
covers the points $1$, $2$, $3$, and $4$ covers the entire gallery.  
\end{example}

The last example can be placed in a larger context. For a point $A$ in a
gallery, call the set of points visible from $A$ the \emph{view} of $A$.  

\begin{proposition}\label{p:convex-cover}
If $\Gamma$ is a gallery in which there exist a set $S$ of
points on the walls such that the points in $S$ have convex views and the
union of these views covers the entire gallery, then $\Gamma$ is normal. 
 
\end{proposition}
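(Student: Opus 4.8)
The plan is to generalize the argument used for Theorem~\ref{t:star}, exploiting the fact that visibility is a symmetric relation together with the convexity of the views. Let $F$ be an arbitrary configuration of guards that visually covers the walls of $\Gamma$; I want to show that $F$ then covers the entire gallery. To that end I would fix an arbitrary point $X \in \Gamma$ and produce a single guard of $F$ that sees $X$.

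First I would use the covering hypothesis to produce a witness. Since the views of the points of $S$ cover $\Gamma$, there is a point $P \in S$ whose view contains $X$; that is, the segment $PX$ lies entirely in $\Gamma$. Next, because $P$ lies on the walls and $F$ covers the walls, some guard $G \in F$ sees $P$. By symmetry of the visibility relation, $P$ also sees $G$, so $G$ belongs to the view of $P$. Thus both $G$ and $X$ lie in the view of $P$.

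The crucial step is the appeal to convexity. Since $P \in S$ has a convex view and both $G$ and $X$ belong to it, the entire segment $GX$ is contained in the view of $P$, and hence in $\Gamma$. Therefore $G$ sees $X$. As $X$ was arbitrary, $F$ covers the whole gallery, and as $F$ was an arbitrary wall-covering configuration, $\Gamma$ is normal.

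The only place the hypotheses are genuinely used is this last transitivity-style move, and getting it right is the main (if modest) obstacle. In general visibility is \emph{not} transitive: $G$ seeing $P$ and $P$ seeing $X$ does not force $G$ to see $X$. Convexity of the view of $P$ is exactly what restores transitivity here, since it forces $GX$ to lie inside the view of $P$ and therefore inside $\Gamma$. I would also stress that the assumption that the points of $S$ lie on the walls is essential, as it is what lets wall-coverage by $F$ guarantee that each relevant witness $P$ is itself seen by some guard. (The star case corresponds informally to taking $S$ to be a single center, but there the center's view is the whole gallery and need not be convex, so that case genuinely requires the separate triangle-sweep argument rather than this one.)
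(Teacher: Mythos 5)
Your proof is correct and is essentially the paper's own argument: the paper's one-sentence proof (``any guard that visually covers a point on the wall with convex view belongs to this view and visually covers it'') is exactly your symmetry-plus-convexity step, which you have merely spelled out in full. Your closing remark that Theorem~\ref{t:star} is not subsumed by this proposition (since a star center need not lie on the walls nor have a convex view) is also accurate.
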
 

\begin{proof}
This is clear since any guard that visually covers a point on the wall with
convex view belongs to this view and visually covers it.  
\end{proof}

\begin{remark}
We point out that there is a polynomial time algorithm that, given a gallery $\Gamma$ with $n$ corners, decides if there exist a set $S$ of points on the walls such that the points in $S$ have convex views and the 
union of these views covers the entire gallery. The main point to observe is that, for a given non-convex gallery,  it is sufficient to check the convexity of the views of the points in the set $S'$ consisting of the non-reflex corners on edges adjacent to reflex-corners and the midpoints of edges that are adjacent to two reflex corners, and then consider the union of the views of the points in $S'$ that have convex views. Details will be provided in a subsequent work. 
\end{remark}

By using Proposition~\ref{p:convex-cover}, one may easily show that there is no upper bound on
the number of guards needed to visually cover normal galleries nor on the number
of reflex corners for such galleries. Indeed, there are spiral,
right-angled galleries (in the spirit of the one shown on the right in
Figure~\ref{f:nostar}, but with more and more ``turns'') that require an 
arbitrarily large number of guards and have an arbitrarily large number of
reflex corners, while still being normal.  

The minimal number of guards that visually cover $\Gamma_6$ is 2. Since all star
galleries are normal, $\Gamma_6$ is, with respect to the number of guards that
can visually cover it, minimal among the galleries that are not normal. The next
example shows that $\Gamma_6$ is not minimal in a different sense.  

\begin{example}
The gallery $\Gamma_8$ in Figure~\ref{f:g8} is not normal, since guards at
corners $4$, $5$ and $8$ visually cover the walls of $\Gamma_8$, but do
not cover the entire gallery. 
\begin{figure}[!ht]
\[
\xymatrix@R=30pt@C=35pt{
 &
 *{\scriptstyle{\bullet}} \ar@{-}[rrrr] \ar@{}[l]|<<<<{\textstyle{1}} \ar@{--}[ddddrrrr] &
 \ar@{}[dl]|<<<<<{H_{4,5,6}}&\ar@{}[dr]|{H_{4,5}}&&
 *{\scriptstyle{\bullet}} \ar@{-}[dddd] \ar@{}[r]|<<<<{\textstyle{2}}&
\\
 &
 *{\ast} \ar@{-}[u] \ar@{}[l]|<<<<{\textstyle{8}} \ar@{}[urr]|<<<<<{H_{3,4,5,6}} & 
 *{\scriptstyle{\bullet}} \ar@{-}[l] \ar@{}[dl]|<<<<{\textstyle{7}} \ar@{--}[u] \ar@{--}[rrr]&
 \ar@{}[d]|{H_{4,5,8}}&&
 \ar@{}[ul]|<<<<{H_4} \ar@{}[r]|<<<<{W_{8,7}} & 
\\
 \ar@{}[d]|{\textstyle{\Gamma_8}} &&
 \ar@{}[r]|{H_{1,5,8}} &&
 \ar@{}[dr]|<<<{H_{4,8}} &&
\\
 &&
 *{\scriptstyle{\bullet}} \ar@{-}[uu] \ar@{}[l]|<<<<{\textstyle{6}} \ar@{--}[uuurrr] \ar@{--}[dd]&&&&
\\
 &
 \ar@{}[dr]|{H_{1,7,8}} &&
 \ar@{}[r]|{H_{1,8}} &&
 *{\scriptstyle{\bullet}} \ar@{-}[dr] \ar@{}[r]|<<<<{\textstyle{3}} \ar@{--}[d]&
\\
 *{\ast} \ar@{-}[uurr] \ar@{}[u]|<<<<{\textstyle{5}} &&&&&
 \ar@{}[ur]|<<<<{H_{1,2,8}}&
 *{\ast} \ar@{-}[llllll] \ar@{}[u]|<<<<{\textstyle{4}}
}
\]
\caption{A gallery that is not normal and can be decomposed into 3 convex polygons}
\label{f:g8}
\end{figure}
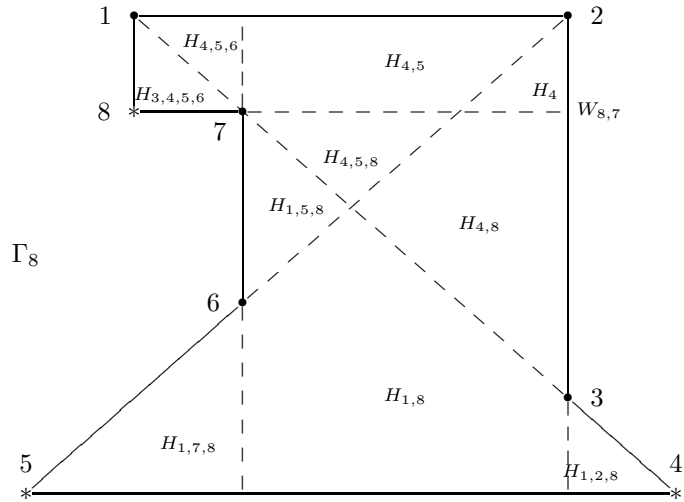
Indeed, none of them covers the points in the (open)
triangular region $H_{4,5,8}$ (the region is bounded by the dashed lines).
Note that $\Gamma_8$ can be decomposed into three convex polygons. On the
other hand, $\Gamma_6$ can be decomposed into 4, but not fewer 
than 4, convex polygons (none of the points $A,B,C,D$ are visible from each
other, showing that no two of them belong to a convex subset of $\Gamma_6$).
Since it is clear that galleries that can be decomposed into two convex polygons
are normal (in fact, they are star galleries; any point on the boundary between
the two convex polygons in the decomposition visually covers the entire
gallery), we see that $\Gamma_8$ is, with respect to the number of convex
polygons needed for its decomposition, minimal among the galleries that are not
normal.  
\end{example}

Note also that $\Gamma_6$ and $\Gamma_8$ are minimal among galleries that are
not normal both with respect to the number of reflex corners (three) and the
number of guards required to visually cover the walls but not the entire gallery
(three).  


\section{Galleries normal with respect to a set}

We begin by providing an example that shows that a gallery may be normal with
respect to a set without being normal.

\begin{example}\label{e:corners-not-enough}
The gallery $\Gamma_9$ in Figure~\ref{f:g9} is normal with respect to its corners, but it is not normal.  

Observe that any corner guard that can see corner 9 (any of 1, 2, 8 or 9) can see the entire
trapezoid 1289, any corner guard that can see the points near corner 8 on the
wall between 8 and 7 ( any of 2, 3, 7 or 8) can see the entire trapezoid 2378,
and any corner guard that can see corner 4 (any of 3, 4, 5, 6, or 7) can see the
entire trapezoid 4567. Since the trapezoids 1289, 2378, and 4567 cover the
entire gallery, this gallery is normal with respect to its corners.  

On the other hand, guards at $G$, 6, and 9 visually cover the walls without
covering the entire gallery.  
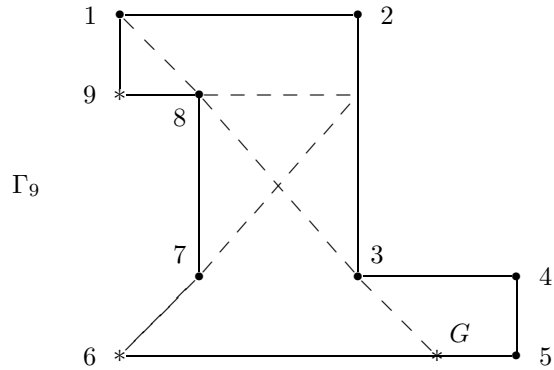
\begin{figure}
\[
\xymatrix{
 &
 *{\scriptstyle{\bullet}} \ar@{-}[rrr] \ar@{}[l]|<<<<{\textstyle{1}} &&&
 *{\scriptstyle{\bullet}} \ar@{-}[ddd] \ar@{}[r]|<<<<{\textstyle{2}}&&&
 \\
 &
 *{\ast} \ar@{-}[u] \ar@{}[l]|<<<<{\textstyle{9}} &
 *{\scriptstyle{\bullet}} \ar@{-}[l] \ar@{--}[rr] \ar@{--}[ul] \ar@{}[dl]|<<<<{\textstyle{8}} &&&&&
 \\
 \Gamma_9 &&&&&&&
 \\
 &&
 *{\scriptstyle{\bullet}} \ar@{-}[uu] \ar@{--}[uurr] \ar@{}[ul]|<<<<{\textstyle{7}} &&
 *{\scriptstyle{\bullet}} \ar@{-}[rr] \ar@{--}[uull] \ar@{}[ur]|<<<<{\textstyle{3}} &&
 *{\scriptstyle{\bullet}} \ar@{-}[d] \ar@{}[r]|<<<<{\textstyle{4}} &
 \\
 &
 *{\ast} \ar@{-}[ur] \ar@{}[l]|<<<<{\textstyle{6}}
 &&&&
 *{\ast} \ar@{--}[ul] \ar@{}[ur]|<<<<{\textstyle{G}}&
 *{\scriptstyle{\bullet}} \ar@{-}[lllll] \ar@{}[r]|<<<<{\textstyle{5}} &
}
\]
\caption{A gallery that is normal with respect to its corners, but is not normal}
\label{f:g9}
\end{figure}

\end{example}

Before we provide the polynomial time algorithm announced in
Theorem~\ref{t:algorithm} let us point out a rather simple approach that leads
to an exponential time solution. Namely, for a given point $P$ in a finite set
$A$ of $m$ points in a gallery with $n$ corners, there exists an algorithm of
time complexity $O(n)$ that determines the view of $P$ (the first such algorithm
for finding the visibility polygon of a point was given by ElGindy and
Avis~\cite{elgindy-a:view}; see also the work of Lee~\cite{lee:view}, and Joe
and Simpson~\cite{joe-s:view}). The same algorithm can be easily modified to
determine, in the same time, the portion of the walls that can be seen from
$P$. Thus, in time $O(mn)$ we may determine both the view and the wall view of
every point in $A$. At this moment, a straightforward approach would be to
check (potentially) all $2^m-1$ nonempty subsets of $A$ in search of a subset
that visually covers the walls without covering the entire gallery, but this
is certainly not an approach we want to follow. The search time for the right
subset of $A$ can be reduced considerably by utilizing the fact that the
gallery is simple and planar, which reduces the number of possible candidates
to polynomially many, and can be reduced even further by considering only a
certain class of candidates (corresponding to minimal visibility regions). 


\subsection{Decomposition into visibility regions}

Let $\Gamma$ be a gallery with $n$ corners and $A$ a set of $m$ points within the
gallery. The results and notions we are using are based on the work of Bose, Lubiw, and Munro~\cite{bose-atal:visibility-queries}. 

A pair of points $(P,C)$ is called \emph{feasible} if $P \neq C$, $P$ is a
point in $A$, $C$ is a reflex corner that is visible from $P$, and the two
walls meeting at $C$ are in the same half-plane with respect to 
$PC$ (by definition, each of the two half-planes includes the line $PC$). For
each feasibe pair $(P,C)$, draw the segment $CW_{P,C}$ on the line $PC$, where
$C$ is between $P$ and $W_{P,C}$ and $W_{P,C}$ is the furthest point on the
walls of $\Gamma$ visible from $P$.
Following~\cite{bose-atal:visibility-queries}, the segment $CW_{P,C}$ is
called the window of $P$ with base $C$.  

\begin{example}
Consider the gallery $\Gamma_8$ in Figure~\ref{f:g8} and let $A$
be the set of corners. The point $W_{8,7}$ is clearly indicated in the figure.
Further, $W_{3,7}=W_{4,7}=1$, $W_{5,6}=2$, and so on. 

As another example, consider the gallery $\Gamma_9$ in Figure~\ref{f:g9}. In
this gallery $W_{8,3}=G$, the pair (4,8) is not feasible because 8 is not
visible from 4, and the pair (6,3) is not feasible because 2 and 4 are on
different sides of the line 6-3.  
\end{example} 

The collection of all windows together with the gallery walls, decomposes the
gallery $\Gamma$ into a finite number of regions called \emph{visibility
regions} with respect to $A$. The decomposition is called the \emph{visibility
decomposition} of $\Gamma$ with respect to $A$. Since there can be no more
than $mn$ feasible pairs the number of regions into which $\Gamma$ is
decomposed is $O((mn)^2)$. However, using the planarity of the structure and a
more careful analysis, Bose, Lubiw, and Munro give a better estimate (they
also provide examples showing that their estimate is sharp).  

\begin{theorem}[Bose, Lubiw, Munro~\cite{bose-atal:visibility-queries}, Theorem 7]\label{t:number-regions}
The number of regions in the visibility decomposition of a gallery $\Gamma$
with $n$ corners with respect to a set $A$ of $m$ points is $O(m^2n)$.  
\end{theorem}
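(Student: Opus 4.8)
The plan is to bound the number of faces of the subdivision directly, via Euler's formula, reducing the whole question to a count of the interior crossings among windows. Regard the decomposition as the arrangement of the $O(mn)$ window segments superimposed on the $n$ polygon edges, and let $V$, $E$, $F$ be the numbers of vertices, edges, and faces. The vertices are the $n$ corners, the $O(mn)$ far endpoints $W_{P,C}$ lying on the walls, the reflex corners serving as window bases, and the interior points where two windows cross; write $X$ for the number of the last type. If a window carries $c_i$ interior crossings it is split into $c_i+1$ edges, and since $\sum_i c_i = 2X$ we get $E = O(mn + X)$ and likewise $V = O(mn + X)$. Hence Euler's formula gives $F \le E + O(1) = O(mn + X)$, and since $mn \le m^2 n$ it suffices to prove $X = O(m^2 n)$.

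Next I would exploit the bundle structure of the windows. All windows sharing a common source $P$ lie on distinct rays emanating from $P$, each through a distinct reflex corner and directed away from $P$; as two distinct rays from a common apex meet only at the apex, which no window contains, windows from the same source are pairwise non-crossing. Therefore every crossing counted by $X$ occurs between a window of one source and a window of a \emph{different} source, and one may write $X = \sum_{\{P,Q\}} X_{P,Q}$, where $X_{P,Q}$ is the number of crossings between the windows of $P$ and those of $Q$ and the sum runs over the $\binom{m}{2}$ unordered pairs of source points.

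The crux is then the single estimate $X_{P,Q} = O(n)$ for each fixed pair of sources; granting it, $X = O(m^2)\cdot O(n) = O(m^2 n)$ and the theorem follows. To prove it I would read $X_{P,Q}$ as the number of proper crossings in the overlay of the two subdivisions cut out by the windows of $P$ and of $Q$. Each such subdivision is the visibility decomposition of a single point, whose non-wall edges are exactly the windows bounding that point's pockets, and each has total complexity $O(n)$. The simplicity of $\Gamma$ is what forces the two families to interleave only $O(n)$ times rather than $\Omega(n^2)$: because every window runs straight from a reflex corner to the far wall, the windows of $P$ inherit a monotone cyclic order around $P$ consistent with the order of reflex corners along the boundary, and likewise for $Q$, so one can charge each $P$-window/$Q$-window crossing to a reflex corner or to an edge of one of the two $O(n)$-complexity visibility polygons and keep the count linear.

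I expect this per-pair crossing bound to be the main obstacle. Establishing $X_{P,Q}=O(n)$ is precisely the place where one cannot treat the windows as arbitrary segments: two arbitrary families of $n$ pairwise non-crossing segments (take $n$ horizontal and $n$ vertical segments in a grid) already produce $\Theta(n^2)$ crossings in their overlay, so the linear bound must come from the specific geometry of visibility in a simple polygon rather than from general position arguments. This is the heart of the analysis of Bose, Lubiw, and Munro~\cite{bose-atal:visibility-queries}, whose matching lower bound shows that neither the factor $m^2$ nor the factor $n$ can be removed.
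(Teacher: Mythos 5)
First, a point of comparison that matters here: the paper gives no proof of this statement at all. It is imported verbatim, with attribution, as Theorem~7 of Bose, Lubiw, and Munro~\cite{bose-atal:visibility-queries}, and used as a black box (just as Theorem~\ref{t:number-sinks} is). So your proposal cannot be matched against an internal argument; it has to stand on its own as a proof, and as such it has a genuine gap.

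The parts you do prove are correct and routine: there are $O(n)$ windows per source point and $O(mn)$ windows in all; windows with a common source are pairwise disjoint (distinct rays from the apex, none of which contains the apex --- valid under the paper's non-collinearity assumption, which also rules out a window of $(P,C_1)$ passing through another base $C_2$); and the Euler-formula bookkeeping $V,E = O(mn+X)$, $F = O(mn+X)$ is fine since every window is attached to the boundary, so the arrangement is connected. This correctly reduces the theorem to the bound $X_{P,Q}=O(n)$ for each pair of sources. But that bound is not a technical step one may defer: it \emph{is} the theorem. Applying the statement of Theorem~\ref{t:number-regions} to the two-point set $A=\{P,Q\}$ gives exactly $X_{P,Q}=O(n)$, and conversely your summation over pairs recovers the general statement from it; so you have reduced the theorem to its $m=2$ case, which is precisely where the factor-$n$ saving over the trivial arrangement bound $O(m^2n^2)$ must be found. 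Your own grid example shows that pairwise disjointness within each family cannot supply it, and the sketched remedy --- ``charge each crossing to a reflex corner or to an edge of one of the two visibility polygons'' --- is not an argument. The monotonicity you invoke (windows of $P$ in angular order around $P$ have their base corners in the corresponding cyclic order along $\partial\Gamma$) is true but insufficient by itself: a single window of $Q$ can cross $\Theta(n)$ windows of $P$ (e.g., in staircase-shaped galleries), so any charging scheme must be shown to have globally bounded multiplicity, and establishing that is the heart of the Bose--Lubiw--Munro analysis, which you explicitly defer back to them. As written, the proposal is an honest and correct reduction of Theorem~\ref{t:number-regions} to its two-point case, not a proof of it.
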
 

The significance of the regions in the visibility decomposition for our purposes is derived from the following observation. 

\begin{lemma}[Bose, Lubiw, Munro~\cite{bose-atal:visibility-queries}, Lemma~19]\label{l:vr-defined}
Any two points in the interior of any visibility region in the visibility
decomposition of $\Gamma$ with respect to $A$ can be seen by the same subset
of points in $A$.  
\end{lemma}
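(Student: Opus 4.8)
The plan is to reduce the statement to a fact about a single source of visibility and then intersect over all of $A$. For a point $P \in A$, let $V(P)$ denote the view of $P$, that is, the set of all points of $\Gamma$ visible from $P$. The subset of $A$ that sees a given point $x$ is $\{P \in A : x \in V(P)\}$, so it suffices to prove that, for each fixed $P \in A$, the interior of every visibility region lies either entirely inside $V(P)$ or entirely outside $V(P)$. Since the interior of a visibility region is a connected open subset of the interior $\Gamma^\circ$ of the gallery that is disjoint from all windows and walls, this in turn follows once we show that the boundary of $V(P)$, taken relative to $\Gamma^\circ$, is contained in the union of the windows $C W_{P,C}$ over all reflex corners $C$ for which $(P,C)$ is feasible.

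First I would invoke the standard structure of the visibility polygon of a point in a simple polygon, produced by the visibility-polygon algorithms cited above: $V(P)$ is a star-shaped polygon with $P$ in its kernel, and each edge of its boundary is either a subsegment of a wall of $\Gamma$ or a ``window'' edge lying on a ray emanating from $P$, collinear with some reflex corner $C$ of $\Gamma$ and extending beyond $C$. Within $\Gamma^\circ$ only the window edges can occur on the relative boundary of $V(P)$, since the wall subsegments lie on $\partial\Gamma$; thus it is enough to identify each such window edge as one of the segments $C W_{P,C}$ for a feasible pair $(P,C)$.

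The main step, and the one I expect to require the most care, is matching each window edge of $V(P)$ with a feasible pair. A window edge is generated by a reflex corner $C$ that is one of its endpoints; being a vertex of $V(P)$, the corner $C$ is visible from $P$, so the first two requirements of feasibility hold. The remaining requirement, that the two walls meeting at $C$ lie in the same half-plane with respect to the line $PC$, is exactly the silhouette condition that makes $C$ cast a shadow: if the two incident walls were on opposite sides of $PC$, the ray $PC$ would cross $\partial\Gamma$ transversally at $C$ and no portion of the ray beyond $C$ would lie in $\Gamma^\circ$, so $C$ could not be the apex of a window edge. Hence $(P,C)$ is feasible, and the window edge, being the maximal segment of the ray from $P$ through $C$ that bounds the lit region beyond $C$, runs from $C$ to the furthest point $W_{P,C}$ on the walls visible from $P$ along that ray; that is, it coincides with $C W_{P,C}$.

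Finally I would assemble the conclusion. By the previous two paragraphs the relative boundary of $V(P)$ in $\Gamma^\circ$ is contained in $\bigcup_{(P,C)\ \mathrm{feasible}} C W_{P,C}$, which is part of the window set used in the visibility decomposition. Consequently a connected open region of the decomposition, being disjoint from every window, cannot meet the boundary of $V(P)$ and so lies entirely inside or entirely outside $V(P)$; that is, either all of its interior points are visible from $P$ or none of them are. Taking the intersection of these all-or-nothing statements over the finitely many $P \in A$ shows that the subset of $A$ seeing a point is constant on the interior of each visibility region, which is the assertion of the lemma.
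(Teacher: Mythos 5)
Your proposal is correct, but note that the paper itself never proves this statement: it is imported verbatim from Bose, Lubiw, and Munro (their Lemma~19), so there is no in-paper argument to match. What you have written is a legitimate self-contained proof, and it follows the natural route: fix $P \in A$, observe that the part of $\partial V(P)$ lying in the interior of $\Gamma$ consists precisely of window edges, check that each such edge has a reflex apex $C$ with $(P,C)$ feasible and coincides with the segment $CW_{P,C}$ of the decomposition, and then use connectedness of a (window-free) region to get the all-or-nothing conclusion for $P$, finally intersecting over the finitely many $P \in A$. Two small points deserve tightening. First, your transversality claim should be stated locally: if the walls at $C$ lie strictly on opposite sides of the line $PC$, then the ray \emph{immediately} beyond $C$ leaves the closed polygon (far portions of the ray may well re-enter $\Gamma$; they are simply not visible from $P$ and hence irrelevant). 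Second, the identification of the far endpoint of the window edge with $W_{P,C}$ merits one explicit line: the window edge ends at the first wall point $W$ hit by the ray beyond $C$, and no wall point on the ray past $W$ can be visible from $P$ because the segment from $P$ to such a point contains points just past $W$ that are exterior to $\Gamma$; hence $W$ is exactly the furthest visible wall point, i.e., $W = W_{P,C}$. With these two clarifications the argument is complete, and it is essentially the argument one would extract from the cited source, resting otherwise only on the standard structure theorem for visibility polygons (star-shaped, with boundary made of wall pieces and chords through reflex vertices), which the paper's own references (ElGindy--Avis, Lee, Joe--Simpson) justify.
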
 

\begin{example}
The visibility regions with respect to the corners  of $\Gamma_8$ are indicated
in Figure~\ref{f:g8}. The boundaries of the regions are represented by dashed
lines. The notation $H_{4,5,8}$ is used as an indicator that this is the
region that is hidden from the corners 4, 5 and 8 (and visible from any other
corner).    
\end{example}


\subsection{Sinks}\label{ss:sink}

We define some special regions, called sinks, in the visibility decomposition.
These regions have minimal visibility with respect to $A$ and play a crucial
role in the work of Bose, Lubiw, and Munro. As
in~\cite{bose-atal:visibility-queries}, we assume that no three distinct
points chosen among the corners of the gallery and the set $A$ are collinear
(in fact, it is sufficient to assume that no two windows are collinear).  

For any region $R$ in the visibility decomposition, let $V(R)$ be the subset of
points in $A$ that visually cover the interior of the region $R$ (note that, by
Lemma~\ref{l:vr-defined}, a point in $A$ either visually controls the entire interior of the region $R$ or it does not control any point in it). 

\begin{definition}\label{d:sinks}
Let $\Gamma$ be a gallery with $n$ corners and $A$ a set of $m$ points within
$\Gamma$. A region $R$ in the visibility decomposition of $\Gamma$ with
respect to $A$ is called a \emph{sink} if, for every region $R'$ that shares a
common boundary edge with $R$, the set $V(R')$ contains the set $V(R)$.  
\end{definition} 

The dual graph to the visibility decomposition is defined as follows. Every
region in the decomposition is represented by a vertex and a directed edge
from the vertex representing $R$ to the vertex representing $R'$ is placed
whenever $R$ and $R'$ share a common boundary edge and $V(R)$ contains
$V(R')$. The non-collinearity condition ensures that the interiors of the
regions $R$ and $R'$ can be seen by the same subset of points in $A$ except
for a single point (the set $V(R)$ has one point more than $V(R')$ and this is
the point $P$ defining the window $PW_{P,C}$ supporting the common boundary
edge of $R$ and $R'$). Thus, any two vertices representing adjacent regions in
the decomposition are also adjacent in the dual graph, the graph is acyclic,
the sinks are precisely the regions corresponding to graph theoretic sinks in
the graph (vertices without outgoing edges), and no two sink regions share a
boundary edge~\cite{bose-atal:visibility-queries}. The dual graph leads, in the work of Bose, Lubiw, and Munro, to
a structure that can be used to recover the sets $V(R)$ for any visibility
region $R$ without keeping the full information for every region in the memory
($V(R)$ is only memorized for the sinks). A particularly useful fact about
sinks leading to good time complexity estimates in their work, as well as in
ours, is their low count (compared to the number of all regions).  

\begin{theorem}[Bose, Lubiw, Munro~\cite{bose-atal:visibility-queries}, Theorem 8]\label{t:number-sinks}
The number of sinks in the visibility decomposition of a gallery $\Gamma$ with
$n$ corners with respect to a set $A$ of $m$ points is $O(m(m+n))$.  
\end{theorem}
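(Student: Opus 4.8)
The plan is to fix a generic direction and charge each sink to its unique lowest vertex, reducing the count to a purely local analysis at the vertices of the visibility decomposition. I regard the decomposition as a planar subdivision whose bounded edges lie on the windows; recall that each window $W_{P,C}$ is a segment on the line through a point $P\in A$ and a reflex corner $C$, and that a sink lies on the shadow (non-$P$) side of every window on its boundary. Choosing a direction (say vertical) so that no window is vertical and no two vertices share a height, I assign to each sink $R$ the vertex $v_R$ of minimum height. At $v_R$ the two incident boundary edges of $R$ both ascend, so $v_R$ is the lowest point of $R$, and I will bound, for each type of vertex, how many sinks can select it.

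I would first dispose of the ``wall'' vertex types, as these already account for the $O(mn)$ term. If $v_R$ is a polygon corner or the wall endpoint $W_{P,C}$ of a window, then $v_R$ has bounded degree, so only $O(1)$ faces are incident to it and at most $O(1)$ of them can have $v_R$ as a lowest vertex; since there are $O(n)$ polygon corners and $O(mn)$ window endpoints (one per feasible pair), these contribute $O(mn)$ sinks. The remaining boundary-type vertex is a reflex corner $C$, the common endpoint of the up to $m$ windows based at $C$ together with its two incident walls; the number of faces incident to $C$ whose two edges at $C$ both ascend is at most the number of these emanating segments, namely $O(m)$, and summing over the $O(n)$ reflex corners again gives $O(mn)$.

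The crux is the case in which $v_R$ is a transversal crossing of two windows $W_{P_1,C_1}$ and $W_{P_2,C_2}$. Here $P_1\neq P_2$ (two windows issued from the same point of $A$ meet only at that point, never in the interior), the four wedges at $v_R$ include exactly one ``top'' wedge, and only this wedge can have $v_R$ as its lowest vertex, so each such crossing is charged by at most one sink. It therefore suffices to bound the number of window crossings whose top wedge is a sink. The naive charge to the pair $(P_1,P_2)$ fails because the windows issued from a fixed pair of points can cross $\Theta(n^2)$ times; the whole difficulty is to show that the additional constraint---that the top wedge lie simultaneously in the shadow of $C_1$ with respect to $P_1$ and in the shadow of $C_2$ with respect to $P_2$, i.e.\ that it be a genuine local minimum of visibility---forces these crossings to be few. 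My plan is to exploit the shadow orientation to pin down, for a sink-bottom crossing, the cyclic positions of $P_1,P_2,C_1,C_2$ relative to $v_R$, and then to run a planarity and discharging argument on the arrangement (in the spirit of \cite{bose-atal:visibility-queries}) showing that the surviving crossings are charged injectively to pairs in $A\times A$ and to incidences in $A\times\{\text{corners}\}$, contributing $O(m^2)$ and $O(mn)$ respectively.

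I expect this last step to be the main obstacle: controlling how a single pair of points can repeatedly produce shadow-on-both-sides crossings is precisely where the simplicity and planarity of $\Gamma$, rather than the structure of a generic segment arrangement, must be invoked, and it is what ultimately yields the tight $O(m(m+n))$ bound. Once the three vertex types are summed, the total is $O(mn)+O(m^2)+O(mn)=O(m(m+n))$, completing the estimate.
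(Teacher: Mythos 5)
Your reduction is sound as far as it goes: charging each sink to its unique lowest vertex in a generic direction, the classification of vertex types, and the observation that a sink lies on the shadow side of every window on its boundary are all correct, and the wall-type cases (polygon corners, window endpoints $W_{P,C}$, and reflex-corner bases) do contribute only $O(mn)$. But the proof stops exactly where the theorem begins. For the crossing case you establish only that each window--window crossing is charged by at most one sink, and then you \emph{state}, without argument, that the crossings whose top wedge is a sink can be charged injectively to pairs in $A \times A$ plus point--corner incidences. That claim is the entire content of the $O(m(m+n))$ bound: the arrangement of the $O(mn)$ windows inside $\Gamma$ can have $\Theta(m^2 n)$ crossings (this many regions can occur, by the sharpness of Theorem~\ref{t:number-regions}), so without the missing step your charging scheme yields only $O(m^2 n)$ --- i.e., the region bound, not the sink bound. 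Announcing ``a planarity and discharging argument in the spirit of~\cite{bose-atal:visibility-queries}'' is a restatement of the goal, not a proof, and you yourself flag it as the main obstacle. So the proposal is a correct bookkeeping reduction wrapped around an unproven core claim; it has a genuine gap.

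For calibration: the paper you are being compared against does not prove this statement at all --- it imports it verbatim as Theorem 8 of Bose, Lubiw, and Munro~\cite{bose-atal:visibility-queries}, and uses only its statement in the complexity analysis of Theorem~\ref{t:algorithm} (the $O(m(m+n))$ factor in Step 3). A blind proof therefore has to reconstruct the counting argument of that reference in full; your write-up reconstructs the easy periphery (generic sweep direction, local charging, boundary cases) but leaves untouched the heart of the matter, namely why a single pair of points of $A$ cannot generate many shadow-on-both-sides crossings --- which is precisely where the simplicity of the polygon, rather than the combinatorics of a generic segment arrangement, must enter.
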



\subsection{The algorithm} 

The idea of the algorithm is simply to construct the visibility
decomposition and then, for each sink region $R$ in the decomposition,
consider the set $\overline{V}(R)$, the complement of $V(R)$ in $A$,
consisting of the vertices in $A$ for which $R$ is hidden, and check if guards
placed at all points in $\overline{V}(R)$ visually cover the walls. 

We first prove that it is indeed enough to check the sets $\overline{V}(R)$ only for sink regions $R$. 

\begin{lemma}\label{l:sink-enough}
A gallery $\Gamma$ is not normal with respect to a finite set $A$ if and only
if there exists a sink region $R$ in the visibility decomposition of $\Gamma$
with respect to $A$ such that guards placed at all points in $\overline{V}(R)$
visually cover the walls of the gallery.  
\end{lemma}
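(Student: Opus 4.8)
```latex
The plan is to prove both implications of the biconditional, relying on
Lemma~\ref{l:vr-defined} (which guarantees that $V(R)$ is well-defined for each
region) and on the structure of the dual graph described above.

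The backward direction is immediate: if $R$ is a sink such that the guards in
$\overline{V}(R)$ visually cover the walls, then by the very definition of
$\overline{V}(R)$ no guard in $\overline{V}(R)$ sees any point in the interior of
$R$, so the interior of $R$ is a nonempty hidden region while the walls are
covered. Hence $\Gamma$ is not normal with respect to $A$. This requires only
that $R$ have nonempty interior, which holds since $R$ is a genuine region in the
decomposition.

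For the forward direction, suppose $\Gamma$ is not normal with respect to $A$,
witnessed by a configuration $F \subseteq A$ that covers the walls but leaves
some interior point $X$ uncovered. First I would locate $X$ inside some
visibility region $R_0$ of the decomposition; by Lemma~\ref{l:vr-defined} the set
$V(R_0)$ consists precisely of the points in $A$ that see $X$, and since no guard
in $F$ sees $X$ we get $F \subseteq \overline{V}(R_0)$. The key point is that
enlarging the guard set from $F$ to $\overline{V}(R_0)$ can only help cover the
walls, so the walls are still covered by $\overline{V}(R_0)$. Now $R_0$ need not
itself be a sink, so I would use the acyclicity of the dual graph: starting from
$R_0$ and repeatedly following outgoing edges (each such step moves to a
neighboring region $R'$ with $V(R') \subsetneq V(R)$, equivalently
$\overline{V}(R') \supseteq \overline{V}(R_0)$), I reach a sink $R$ after finitely
many steps since the graph is finite and acyclic. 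Along this path the sets
$\overline{V}$ only grow, so $\overline{V}(R) \supseteq \overline{V}(R_0)$, and
therefore the guards in $\overline{V}(R)$ still cover the walls, as required.

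The main obstacle I anticipate is justifying cleanly that passing to the sink $R$
preserves wall coverage, i.e.\ the monotonicity that $\overline{V}$ grows along
the directed edges of the dual graph. This hinges on reading the edge orientation
correctly: a directed edge goes from $R$ to $R'$ exactly when $V(R) \supseteq
V(R')$, so following edges \emph{decreases} $V$ and \emph{increases}
$\overline{V}$; terminating at a sink gives the inclusion-maximal
$\overline{V}$ reachable from $R_0$, and since coverage of the walls is a
monotone property in the guard set (adding guards never uncovers a covered
point), wall coverage is inherited by the sink. I would make sure to invoke the
stated facts that the dual graph is acyclic and finite, so that the descent
terminates, and that sinks are exactly the vertices with no outgoing edges.
```
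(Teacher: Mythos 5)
Your proof is correct and follows essentially the same route as the paper: the backward direction is the same immediate observation, and the forward direction likewise places the hidden point in a visibility region, enlarges $F$ to $\overline{V}(R_0)$, and walks along adjacent regions with strictly growing $\overline{V}$ until a sink is reached, using monotonicity of wall coverage. The only cosmetic difference is that you justify termination via finiteness and acyclicity of the dual graph, whereas the paper argues directly from the strict growth of the guard sets at each step.
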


\begin{proof}
If such a sink region $R$ exists, the guards placed at all points in
$\overline{V}(R)$ visually control the walls and none of them controls any
point inside $R$, which, by definition, implies that $\Gamma$ is not normal
with respect to $A$.  

Assume that $\Gamma$ is not normal with respect to $A$. Then there exists a
configuration of guards at some subset $F$ of $A$ that visually controls the
walls, but not the entire gallery. This means that there exists an open set
within $\Gamma$ that is hidden to all points in $F$. This open set must
contain an interior point of some visibility region $R$ in the visibility
decomposition of $\Gamma$ with respect to $A$. Since none of the points in $F$
visually controls $R$, $F$ is a subset of $\overline{V}(R)$. If we enlarge $F$
to $\overline{V}(R)$, guards placed in all points in $\overline{V}(R)$ still
have no visual control of the region $R$ and they still control the walls
(since the guards in $F$ already do). If $R$ is a sink we are done. Otherwise,
there is an adjacent region $R'$ such that $V(R') \subsetneq V(R)$ and,
consequently, $\overline{V}(R) \subsetneq \overline{V}(R')$. We may then
enlarge the configuration of guards to the set of all points in
$\overline{V}(R')$. This enlarged configuration of guards has no visual
control of the region $R'$, but still controls the walls (since the guards in
$\overline{V}(R)$ already do). If $R'$ is a sink we are done. Otherwise we
continue the procedure by moving to an adjacent region $R''$ with an even
larger set $\overline{V}(R'')$. The finiteness of the visibility decomposition
(the finiteness of $A$) and the fact that we are adding guards at every step
ensures that this procedure must eventually end in a sink region.  
\end{proof}

\begin{proof}[Proof of Theorem~\ref{t:algorithm}]
The input of the algorithm is $\Gamma$, a gallery with $n$ walls, and $A$, a set
of $m$ points within the gallery. The corners of $\Gamma$ and the points in $A$ are
given by their coordinates in the plane. The corners are given in a sequence
that indicates their order along the boundary of the gallery (say counterclockwise). The output is
NORMAL, if the gallery $\Gamma$ is normal with respect to $A$, and NOT NORMAL,
otherwise. The algorithm proceeds through three steps (phases) described below, along
with time estimates. The steps are sequential, so the time complexity of the
entire algorithm is equal to the maximal time complexity of the individual
steps. \\

\begin{description}

\item[Step 1] \emph{Construct the visibility decomposition, identify the sink regions, and calculate $V(R)$ for each sink region $R$} \\
This step is, essentially, the preprocessing step described in Section 4
in~\cite{bose-atal:visibility-queries}. It uses the visibility polygon
algorithm (ElGindy and Avis~\cite{elgindy-a:view}, Lee~\cite{lee:view}, Joe
and Simpson~\cite{joe-s:view}) and the Bentley-Ottmann Algorithm for segment
intersections~\cite{bentley-ottmann:intersections} (see~\cite[Chapter
2]{deberg-atal:book-comp-geom} for a more current treatment). As indicated
in~\cite[Theorem 9]{bose-atal:visibility-queries}, this step takes $O(m^2(m+n)
\log n)$ time. \\ 

\item[Step 2] \emph{Determine the portion of the wall covered by each point in $A$} \\
For every point $P$ in $A$ determine the portion $W(P)$ of the gallery walls
that is visually covered by $P$. The output $W(P)$ should be given as follows.
Fix a corner of the gallery, and identify the boundary of the gallery with the
interval $[0,L]$, where $L$ is the total length of the gallery walls, 0
represents the chosen corner, and the point $x$ in the interval $[0,L]$
represents the point at distance $x$ along the walls as they are traversed in
the counterclockwise direction. The portion of the walls $W(P)$ visible to $P$
is represented as the union of $O(n)$ subintervals of the interval $[0,L]$.
Moreover, the visibility polygon algorithm provides this union in a sorted
manner, i.e., $W(P)$ consists of $O(n)$ subintervals of $[0,L]$ that are given
as a sorted list of endpoints. Since there are $m$ points in $A$ and the
visibility polygon algorithm requires $O(n)$ time for each point in $A$, this
step can be completed in $O(mn)$ time. \\ 

\item[Step 3] \emph{For each sink region $R$, check if $\overline{V}(R)$ controls the walls} \\
Consider a fixed sink region $R$. The set $\overline{V}(R)$ can be determined
in $O(m)$ time (since $A$ has $m$ elements and $V(R)$ is already known from
Step 1). Consider the union $\cup_{P \in \overline{V}(R)} W(P)$. We need to
check if this union covers the entire interval $[0,L]$. Each of the $O(m)$
wall portions $W(P)$, for $P$ in $\overline{V}(R)$, comes as a sorted list of
$O(n)$ subintervals of $[0,L]$. All these lists can be merged in time $O(mn
\log m)$ into a single sorted list of $O(mn)$ endpoints (for each endpoint one
only needs to keep its value in the interval $[0,L]$ and the information if it
is a left or a right endpoint; all endpoints are sorted by their value on
$[0,L]$, and for endpoints that have the same value the left endpoints are
considered smaller than the right endpoints). At this point, we may use Klee's
algorithm~\cite{klee:intervals} (see also~\cite[Chapter 8]{preparata:book})
that determines the measure of the union of  intervals in linear time with
respect to the number of intervals, as long as the endpoints are already
presorted (this is precisely why we first perform the merge sort indicated
above). Thus, after the presorting is completed, it may be checked in $O(mn)$
time if $\cup_{P \in \overline{V}(R)} W(P)$ is equal to $[0.L]$, i.e., if
guards placed at all points in $\overline{V}(R)$ cover the walls of the
gallery. Since there are $O(m(m+n))$ sink regions, the merging of the
intervals takes $O(mn \log m)$ time and  Klee's algorithm takes $O(mn)$ time,
this whole step takes $O(m(m+n))O((mn \log m) + mn)  = O(m^2n(m+n) \log m)$
time.  

If, for some sink region $R$, the union $\cup_{P \in \overline{V}(R)} W(P)$
turns out to be the whole interval $[0,L]$ stop and report NOT NORMAL.
Otherwise, after all sink regions are checked, stop and report NORMAL.  

\end{description}

\end{proof} 

\begin{remark}
Since the time complexity of the algorithm is carried by Step 3, there is no
need to try to improve the bounds in Step 1 by replacing the Bentley-Ottmann
Algorithm by any of the faster algorithms for segment intersections and map
overlays such as Chazelle-Edelsbrunner~\cite{chazelle-e:intersections} or
Balaban~\cite{balaban:intersections}. The time complexity of Step 3 is only
affected by the number of sink regions and the complexity of the merge sort of
several presorted lists. Since the estimate $O(m(m+n))$ for the number of sink
regions is shown to be sharp in~\cite{bose-atal:visibility-queries}, it seems
that there is not much room for improvements (unless an entirely different approach is
taken). 
\end{remark}

\section{Concluding Remarks}

In this work, we defined the notion of a normal gallery, a gallery in which
every configuration of guards that visually controls the walls necessarily
controls the interior. We established several sufficient conditions for a
gallery to be normal and provided an algorithm, running in polynomial time,
that checks if a given gallery is normal with respect to a given finite set of
positions within the gallery.  

We mention some natural follow up problems/questions. 

Given that a full characterization of normal galleries seems to
be a difficult task, there are two ways to proceed. One is to replace
the attempt to characterize by an attempt to find good sufficient and/or
necessary conditions and the other is to limit the domain
on which a characterization is sought. We briefly discuss both approaches. 

\subsection{Sufficient conditions} 

As mentioned in the introduction, there are situations in which checking if a
configuration of guards visually covers the entire gallery may be impractical,
while checking if they cover the walls may be relatively easy to implement. If
we know ahead of time that the gallery is normal, we know that it is actually
sufficient to check and ensure only the visual coverage of the walls. This
motivates a search for a wide range of sufficient conditions for normality
that are simple to verify (three such simple sufficient conditions are already
given here in Theorem~\ref{t:2reflex}, Theorem~\ref{t:star} and
Proposition~\ref{p:convex-cover}). 

\subsection{Normality in classes of galleries}

Since all galleries with at most two reflex corners are normal, and some
galleries with three reflex corners are not normal, a natural place to start is
to try to characterize the class of normal galleries with exactly three reflex
corners. 

In case a sufficiently simple and useful characterization is not possible, one
may try to find an algorithm that decides, given a gallery with three reflex
corners as input, if the given gallery is normal. 

Similarly, since all galleries that can be decomposed into two convex
parts are normal, one may try to characterize the normal galleries that can be
decomposed into three convex parts. 

Clearly, such questions (characterization and/or algorithmic
decidability of normality) may be posed within any other well defined class of
galleries. 

\subsection{Size and location of witness sets} 

Assume a gallery $\Gamma$ with $n$ corners is not normal. Call the smallest size of a configuration of guards that visually covers the walls, but not the entire gallery, the witness set size of $\Gamma$, and denote it by $wss(\Gamma)$.  Corollary~\ref{c:ws3} states that, for every non-normal gallery $\Gamma$, $wss(\Gamma) \geq 3$. It would be interesting to establish upper bounds on the maximum witness set size 
\[
 wss(n)=\max \{wss(\Gamma) \mid \Gamma\text{ a non-normal gallery with } n \text{ corners}\}. 
\]
in non-normal galleries in terms of the number of corners $n$. 

It is desirable to limit the search for witness sets to particular regions or even particular locations, if and when possible. For instance, is it true that every non-normal gallery has at least one witness set within the gallery walls? Note that Example~\ref{e:corners-not-enough} shows that the search for a witness set cannot be limited to the corners of the gallery. On the other hand, it would be interesting to effectively characterize the non-normal galleries that have witness sets within the set of corners.

\subsection*{Acknowledgments}

The author would like to thank Joseph O'Rourke for his comments and advice, and
the referees for their input, which led to a greatly improved version
of the text.


\end{document}